%% file: main.tex
\documentclass[sigconf]{aamas}

\usepackage{balance}
\usepackage{booktabs}
\usepackage{graphicx}
\usepackage{amsmath}
\usepackage{amsthm}
\usepackage{algorithm}
\usepackage{algpseudocode}
\usepackage{multirow}
\usepackage{url}

\newtheorem{definition}{Definition}
\newtheorem{proposition}{Proposition}
\newtheorem{theorem}{Theorem}

\newcommand{\secref}[1]{Section~\ref{sec:#1}}

\usepackage{xcolor}

\setcopyright{ifaamas}
\acmConference[AAMAS '26]{Proc.\@ of the 25th International Conference
on Autonomous Agents and Multiagent Systems (AAMAS 2026)}{May 25 -- 29, 2026}
{Paphos, Cyprus}{C.~Amato, L.~Dennis, V.~Mascardi, J.~Thangarajah (eds.)}
\copyrightyear{2026}
\acmYear{2026}
\acmDOI{}
\acmPrice{}
\acmISBN{}

\acmSubmissionID{1}

\title{Strategic Heterogeneous Multi-Agent Architecture for Cost-Effective Code Vulnerability Detection}

\author{Zhaohui Geoffrey Wang}
\affiliation{
  \institution{University of Southern California}
  \city{Los Angeles}
  \state{CA}
  \country{USA}}
\email{zwang000@usc.edu}

\begin{abstract}
\input{src/abstract.tex}
\end{abstract}

\keywords{Multi-Agent Systems; Game Theory; Large Language Models; Vulnerability Detection; Strategic Cooperation}

\begin{document}

\pagestyle{fancy}
\fancyhead{}

\maketitle

\input{src/introduction.tex}
\input{src/related_work.tex}
\input{src/methodology.tex}
\input{src/experiments.tex}
\input{src/results.tex}
\input{src/conclusion.tex}

\begin{acks}
This work was conducted at the University of Southern California. We thank the anonymous reviewers for their constructive feedback, which substantially improved this paper.
\end{acks}

\bibliographystyle{ACM-Reference-Format}
\bibliography{main}

\appendix
\input{src/appendix.tex}

\end{document}

%% file: src/abstract.tex
Automated code vulnerability detection is critical for software security, yet existing approaches face a fundamental trade-off between detection accuracy and computational cost. We propose a heterogeneous multi-agent architecture inspired by game-theoretic principles, combining cloud-based LLM experts with a local lightweight verifier. Our ``3+1'' architecture deploys three cloud-based expert agents (DeepSeek-V3) that analyze code from complementary perspectives---code structure, security patterns, and debugging logic---in parallel, while a local verifier (Qwen3-8B) performs adversarial validation at zero marginal cost. We formalize this design through a two-layer game framework: (1)~a cooperative game among experts capturing super-additive value from diverse perspectives, and (2)~an adversarial verification game modeling quality assurance incentives. Experiments on \textbf{262 real samples} from the NIST Juliet Test Suite across 14~CWE types, with \textbf{balanced vulnerable and benign} classes, demonstrate that our approach achieves 77.2\% F1 score with 62.9\% precision and 100\% recall at \$0.002 per sample---outperforming both a single-expert LLM baseline (F1 71.4\%) and Cppcheck static analysis (MCC 0). The adversarial verifier significantly improves precision (+10.3 percentage points, $p < 10^{-6}$, McNemar's test) by filtering false positives, while parallel execution achieves a 3.0$\times$ speedup. Our work demonstrates that game-theoretic design principles can guide effective heterogeneous multi-agent architectures for cost-sensitive software engineering tasks.

%% file: src/introduction.tex
\section{Introduction}
\label{sec:introduction}

Code vulnerability detection is a critical challenge in software security, with the global cost of cybercrime projected to reach \$10.5 trillion annually~\cite{cybersecurity2024}. Traditional static analysis tools like Cppcheck~\cite{cppcheck} offer fast, low-cost analysis but suffer from limited semantic understanding, while single-agent LLM approaches achieve higher accuracy at prohibitive cost. Recent multi-agent LLM systems~\cite{vultrial2025,mulvul2026,multiver2026} have shown promising results, but typically employ homogeneous agent pools---all agents use the same expensive cloud model---missing opportunities for cost optimization through heterogeneous design.

This paper addresses a fundamental question: \textit{How should we allocate heterogeneous computational resources across multiple agents to achieve optimal trade-offs between detection quality and operational cost?} The key insight is that vulnerability detection benefits from two distinct capabilities: (1)~deep semantic analysis requiring powerful models, and (2)~consistency checking that can be performed by lighter models. This asymmetry motivates a heterogeneous architecture.

We propose a \textbf{``3+1'' heterogeneous multi-agent architecture} combining:
\begin{itemize}
    \item \textbf{Three cloud-based LLM experts} (DeepSeek-V3~\cite{deepseekv3}) analyzing code from complementary perspectives: code structure analysis, security pattern matching, and debugging/edge-case reasoning. These agents execute \textit{in parallel}, with diverse viewpoints providing complementary coverage that exceeds any single perspective.
    \item \textbf{One local verifier} (Qwen3-8B~\cite{qwen3}) performing adversarial validation at zero marginal API cost. Running on a local GPU, this agent receives all expert reports and checks for consistency, hallucinated claims, and missed vulnerabilities.
\end{itemize}

We formalize the architectural design through a \textbf{two-layer game framework}:
\begin{enumerate}
    \item \textbf{Layer 1 (Cooperative Game)}: The three experts form a coalition with super-additive value---their combined detection covers vulnerability patterns that individual experts miss, formalizing why diverse perspectives outperform identical agents.
    \item \textbf{Layer 2 (Adversarial Game)}: The verifier acts as an independent adversary that improves system precision by catching false positives and hallucinations in expert outputs.
\end{enumerate}

Critically, we show that the game-theoretic framework yields \textbf{testable predictions} validated empirically: (1)~diverse expert coalitions exhibit super-additive value when combined with verification, outperforming a single-expert baseline (Proposition~\ref{prop:superadd}), (2)~independent verification significantly improves precision ($p < 10^{-6}$, Theorem~\ref{thm:verification}), and (3)~the heterogeneous cloud-local split achieves favorable cost-quality trade-offs analyzable through mechanism design. While our LLM agents do not engage in explicit strategic reasoning, the game-theoretic framework correctly predicts system-level behavior---validating its utility as an \textit{analytical and design tool} for multi-agent LLM architectures.

We make the following contributions:
\begin{enumerate}
    \item \textbf{A game-theory-inspired framework} for heterogeneous multi-agent vulnerability detection, formalizing cooperative expert analysis and adversarial verification as a principled design methodology (\secref{methodology}).

    \item \textbf{Rigorous empirical evaluation} on \textbf{262 real samples} from the NIST Juliet Test Suite covering 14 CWE types with \textbf{balanced vulnerable and benign classes}, reporting standard metrics (Precision, Recall, F1, FPR, MCC) with bootstrap 95\% confidence intervals and McNemar's significance test (\secref{results}).

    \item \textbf{Practical cost-quality analysis} demonstrating that heterogeneous cloud-local architectures achieve favorable Pareto trade-offs, with the local verifier adding precision improvement at negligible cost (\secref{results}).
\end{enumerate}

The remainder of this paper is organized as follows. \secref{related} reviews related work. \secref{methodology} presents our framework and architecture. \secref{experiments} describes the experimental setup. \secref{results} presents results and analysis. \secref{conclusion} concludes with limitations and future directions.

%% file: src/related_work.tex
\section{Related Work}
\label{sec:related}

\subsection{Static Analysis for Vulnerability Detection}

Traditional static analysis tools employ rule-based pattern matching and data-flow analysis to detect vulnerabilities. Cppcheck~\cite{cppcheck} uses syntactic pattern matching for C/C++, while commercial tools like Coverity~\cite{coverity} combine multiple techniques for enterprise deployment. However, these tools face inherent limitations: high false positive rates due to lack of semantic understanding, and difficulty detecting complex vulnerabilities that span multiple code paths~\cite{johnson2013why}. Their precision-recall trade-off is fundamentally constrained by the expressiveness of static rules.

\subsection{LLM-Based Vulnerability Detection}

Recent work has applied LLMs to vulnerability detection with increasing sophistication. Single-agent approaches using GPT-4~\cite{gpt4} or specialized code models achieve strong detection rates but at significant computational cost. The PrimeVul benchmark~\cite{primevul2024} revealed a critical evaluation gap: state-of-the-art models achieve only $\sim$3--35\% F1 on carefully deduplicated real-world datasets, compared to $\sim$68\% F1 on older benchmarks like Big-Vul that contain significant data leakage. This highlights the importance of rigorous evaluation methodology with balanced classes and proper metrics.

Multi-agent approaches have emerged as a promising direction for vulnerability detection. VulTrial~\cite{vultrial2025} proposes a mock-court framework where specialized agents (security researcher, code author, moderator) debate vulnerabilities through structured argumentation, achieving 102\% improvement over single-agent baselines at ICSE 2026. MulVul~\cite{mulvul2026} combines RAG-augmented multi-agent analysis with cross-model prompt evolution, achieving the current state-of-the-art 34.79\% Macro-F1 on PrimeVul. MultiVer~\cite{multiver2026} uses a zero-shot 4-agent ensemble with union voting, reaching 82.7\% recall on PyVul. LLMxCPG~\cite{llmxcpg2025} integrates LLMs with Code Property Graphs for structurally-enhanced analysis.

Our work differs from these approaches by explicitly optimizing for \textit{cost-quality trade-offs} through heterogeneous agent design. While VulTrial and MulVul use homogeneous model pools (all agents share the same expensive API), we combine paid cloud experts with a free local verifier, making cost a first-class design objective.

\subsection{Multi-Agent Debate and Collaboration}

A growing body of work demonstrates that multi-agent debate improves LLM output quality. Du et al.~\cite{du2023debate} show that having multiple LLMs debate improves factuality and reasoning, with agents correcting each other's errors through iterative rounds---a mechanism that directly inspires our adversarial verification loop. Liang et al.~\cite{liang2023debate} extend this to divergent thinking, finding that multi-agent debate elicits more diverse reasoning paths. CAMEL~\cite{li2023camel} introduces role-playing communication protocols for multi-agent collaboration, demonstrating that specialized agent roles outperform generic prompting. These works establish that \textit{iterative multi-agent interaction improves output quality}, but none apply this mechanism to the specific cost-quality trade-offs of vulnerability detection with heterogeneous models.

In software engineering, MetaGPT~\cite{hong2023metagpt} uses structured multi-agent workflows for collaborative development, and AutoGen~\cite{wu2023autogen} provides general-purpose agent coordination. However, these frameworks do not explicitly model the economic trade-offs of heterogeneous model allocation.

\subsection{Game Theory and LLM Multi-Agent Systems}

Game-theoretic analysis of LLM-based multi-agent systems is an emerging area. A comprehensive survey~\cite{gtlens2025} formalizes LLM interactions as Nash and Stackelberg games, defining ``LLM-based equilibria'' where strategies co-evolve through dialogue. The ECON framework~\cite{econ2025} models multi-LLM coordination as an incomplete-information game seeking Bayesian Nash Equilibrium, where each LLM responds based on beliefs about co-agents. GTBench~\cite{gtbench2024} benchmarks LLM strategic reasoning, finding that LLMs deviate from rational strategies as game complexity increases---motivating our use of game theory as a \textit{design framework} rather than assuming strategic behavior by LLM agents. GT-HarmBench~\cite{gtharm2026} extends game-theoretic evaluation to multi-agent safety, benchmarking risks in adversarial settings---complementary to our adversarial verification game.

\subsection{Positioning of Our Work}

Our architecture shares the multi-perspective analysis approach with VulTrial and MultiVer, and draws on multi-agent debate~\cite{du2023debate,liang2023debate} for iterative quality improvement. We uniquely combine:
(1)~\textit{heterogeneous resource allocation}---paid cloud APIs with free local models;
(2)~\textit{game-theoretic design rationale}---cooperative and adversarial game formulations justifying architectural choices; and
(3)~\textit{rigorous evaluation}---balanced classes, standard metrics, and statistical significance testing.
The closest concurrent work, MulVul, achieves state-of-the-art detection on PrimeVul but does not address cost optimization. Our contribution is complementary: showing \textit{how} to architect cost-effective multi-agent systems with game-theoretic principles.

%% file: src/methodology.tex
\section{Methodology}
\label{sec:methodology}

We present a game-theoretic framework for heterogeneous multi-agent vulnerability detection. Unlike prior multi-agent approaches that treat agent coordination as an engineering problem, we formalize the design space through cooperative and adversarial game theory, derive analytical properties, and validate these predictions empirically.

\subsection{Problem Formulation}

Given a code snippet $x$, vulnerability detection seeks to determine whether $x$ contains security vulnerabilities and identify their CWE types. We formulate this as a multi-agent mechanism design problem: how to allocate heterogeneous computational resources across agents to maximize detection quality $Q$ while minimizing cost $C$.

\begin{definition}[Multi-Agent Vulnerability Detection Game]
A multi-agent detection system is a tuple $\mathcal{M} = \langle \mathcal{A}, \mathcal{S}, C, Q, \pi \rangle$ where:
\begin{itemize}
    \item $\mathcal{A} = \{a_1, \ldots, a_n\}$ is the set of heterogeneous agents
    \item $\mathcal{S} = \{s_1, \ldots, s_n\}$ is the strategy space (analysis perspective, model choice, effort level)
    \item $C: \mathcal{S}^n \rightarrow \mathbb{R}^+$ is the cost function (API charges + compute)
    \item $Q: \mathcal{S}^n \rightarrow [0,1]$ is the quality function (detection F1)
    \item $\pi: \mathcal{A} \times \mathcal{S}^n \rightarrow \mathbb{R}$ is the payoff function, $\pi_i = w_1 Q - w_2 C_i$
\end{itemize}
\end{definition}

The \textit{mechanism designer's problem} is to select agent types, roles, and interaction protocols that induce a desirable equilibrium---high $Q$ at low $C$. This connects our work to the mechanism design literature~\cite{nisan2001algorithmic}: we design the ``rules of the game'' so that the system-level outcome is Pareto-optimal.

\subsection{Two-Layer Game Structure}

We decompose the detection problem into two interconnected games.

\subsubsection{Layer 1: Expert Cooperative Game}

The first layer models cooperation among $k$ expert agents with complementary analysis perspectives as a coalitional game.

\begin{definition}[Expert Coalition Game]
An expert coalition game is $G_1 = \langle N, v \rangle$ where:
\begin{itemize}
    \item $N = \{1, 2, \ldots, k\}$ is the set of expert agents
    \item $v: 2^N \rightarrow \mathbb{R}$ is the characteristic function
\end{itemize}
The coalition value captures the net utility of subset $S \subseteq N$:
\begin{equation}
\label{eq:coalition}
v(S) = w_1 \cdot Q(S) - w_2 \cdot C(S)
\end{equation}
where $Q(S)$ is the detection quality (F1) achievable by coalition $S$, and $C(S)$ is its total cost.
\end{definition}

\begin{proposition}[Super-Additivity of Diverse Coalitions]
\label{prop:superadd}
Let $S, T \subseteq N$ be disjoint coalitions with \textbf{complementary} analysis perspectives (i.e., each agent covers vulnerability patterns not covered by others). If agents in $S$ detect vulnerability set $V_S$ and agents in $T$ detect $V_T$, then:
\begin{equation}
Q(S \cup T) \geq \max(Q(S), Q(T))
\end{equation}
with strict inequality when $V_S \not\subseteq V_T$ and $V_T \not\subseteq V_S$. Furthermore, if $C(S \cup T) = C(S) + C(T)$ (independent cost), then $v(S \cup T) \geq v(S) + v(T)$ when the quality improvement outweighs the cost increase.
\end{proposition}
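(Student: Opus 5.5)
The plan is to make the informal statement precise through a coverage model of detection quality, prove the monotonicity and strictness claims there, and then derive the value inequality from the definition of $v$ in Eq.~\eqref{eq:coalition}. Here is the model. Let $\mathcal{V}$ be the set of vulnerability instances in the evaluation distribution. A coalition $S$ flags $x$ as vulnerable when at least one member flags it (union aggregation), so its detected set is $V_S=\bigcup_{i\in S}V_i$. I will treat quality as a function of the detected set, $Q(S)=q(V_S)$, and assume $q$ is monotone nondecreasing in set inclusion and strictly increasing when a new true vulnerability is added. Recall is the simplest such $q$: $q(V)=\mu(V)/\mu(\mathcal{V})$ for a counting measure $\mu$. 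For F1, these properties need an extra assumption that the union adds no new false positives. That is the role the verifier plays in Layer~2, and I will state it explicitly as a hypothesis rather than hide it.

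\textbf{Step 1 (weak inequality).} Since $V_{S\cup T}=V_S\cup V_T$ contains both $V_S$ and $V_T$, monotonicity of $q$ gives $Q(S\cup T)\ge Q(S)$ and $Q(S\cup T)\ge Q(T)$. Hence $Q(S\cup T)\ge\max(Q(S),Q(T))$.

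\textbf{Step 2 (strictness).} Suppose $V_S\not\subseteq V_T$ and $V_T\not\subseteq V_S$. Then $V_S\cup V_T$ strictly contains each of $V_S$ and $V_T$. If we interpret the sets as sets of instances with positive measure, strict monotonicity of $q$ gives strict inequality against both, and therefore against the maximum. Here the complementarity hypothesis is exactly what rules out nesting.

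\textbf{Step 3 (value).} Write $v(S\cup T)-v(S)-v(T)=w_1\bigl[Q(S\cup T)-Q(S)-Q(T)\bigr]-w_2\bigl[C(S\cup T)-C(S)-C(T)\bigr]$. With independent cost, the second bracket vanishes, so super-additivity holds if and only if $Q(S\cup T)\ge Q(S)+Q(T)$. I will read the proposition's clause ``when the quality improvement outweighs the cost increase'' as precisely this condition. Under recall with disjoint detected sets ($V_S\cap V_T=\emptyset$, the idealized complementary case), inclusion--exclusion gives $\mu(V_S\cup V_T)=\mu(V_S)+\mu(V_T)$, so the condition holds with equality. More generally, the overlap $\mu(V_S\cap V_T)$ is exactly the shortfall, so the condition reduces to bounding that overlap.

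\textbf{Main obstacle.} The hard step is the F1 case. Union aggregation can increase false positives, so F1 is not monotone in the detected set without further assumptions. I would first try to assume that the false-positive set under union is controlled: either a downstream filter (the Layer~2 verifier) removes the additional false positives, or each expert's false positives are contained in a common set. Under either assumption, precision does not drop while recall rises, and F1 is monotone again. I would state this dependence on verification openly, since it matches the paper's framing of super-additivity ``when combined with verification''.
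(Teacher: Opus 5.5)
The paper gives no proof of Proposition~\ref{prop:superadd}. It states the result, motivates it by the union aggregation of CWE pools (line~3 of Algorithm~\ref{alg:workflow}), and then points to empirical evidence: the 100\% CWE match rate and the 3+1 versus single-expert comparison. Your coverage model, $Q(S)=q(V_S)$ with $V_S=\bigcup_{i\in S}V_i$ and $q$ strictly monotone, is the natural formalization of that union-coverage intuition. Within it, your Steps~1 and~2 are correct and complete. Step~3 is also right: with independent costs the cost terms cancel, so the clause about quality ``outweighing the cost increase'' says nothing about cost, and super-additivity of $v$ is equivalent to $Q(S\cup T)\ge Q(S)+Q(T)$.

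What your route adds over the paper is that it makes the load-bearing hypotheses visible. The first is union aggregation. The paper's binary decision is majority vote or verifier override; union is used only for the CWE pool. Moreover, every expert in the paper has 100\% recall, so the $V_i$ coincide at the binary level and the strict case never arises there. Your model therefore fits the $Q_{CWE}$ reading of Appendix~\ref{sec:appendix_shapley}, not binary F1. The second is your false-positive hypothesis, and it is not a technicality. In Table~\ref{tab:main_results} the three-expert coalition without verifier has F1 $.689$, below the $.714$ of the single security expert it contains. So for $Q$ equal to F1, the inequality $Q(S\cup T)\ge\max(Q(S),Q(T))$ fails on the paper's own data unless such a condition is imposed.

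Two points need tightening.

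\textbf{The common-set fallback.} Your second fallback (each expert's false positives contained in a common set) does not control precision. False-positive sets $\{a\}$ and $\{b\}$ both lie in $\{a,b\}$, yet their union has more false positives than either. Under pure union, the false-positive sets must coincide, or a downstream filter must remove the excess, as in your first fallback. More generally, write F1 $=2TP/(TP+FP+P)$ with $P$ the number of vulnerable samples. Passing from counts $(TP,FP)$ to $(TP+\Delta TP,\,FP+\Delta FP)$ does not lower F1 iff $\Delta TP\,(FP+P)\ge TP\,\Delta FP$. For this, $\Delta FP\le\Delta TP$ suffices, and the increase is strict when $\Delta TP>0$.

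\textbf{The overlap in Step~3.} ``Bounding the overlap'' should read ``the overlap is null.'' For recall, inclusion--exclusion makes $Q$ subadditive. Hence with additive costs $v(S\cup T)\le v(S)+v(T)$ always, with equality exactly when $\mu(V_S\cap V_T)=0$. For any $Q\le 1$, the condition is also impossible once $Q(S),Q(T)>1/2$. So a coverage model proves only the ``$\ge\max$'' monotonicity claim. It cannot yield the strict super-additivity of $v$ that the proposition's title suggests, and you should say this outright rather than leave it as a reinterpretation of the final clause.
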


This motivates our choice of three \textit{diverse} expert roles: a code structure analyst, a CWE-taxonomy security expert, and a debugging specialist. Their coalition should outperform any single expert---a~prediction we validate in \secref{results}.

\paragraph{Shapley Value Analysis.} The Shapley value $\phi_i$ quantifies each expert's marginal contribution to the coalition:
\begin{equation}
\label{eq:shapley}
\phi_i(v) = \sum_{S \subseteq N \setminus \{i\}} \frac{|S|!(k-|S|-1)!}{k!} [v(S \cup \{i\}) - v(S)]
\end{equation}
For our 3-expert system, the Shapley value determines how much each perspective (code structure, security, debugging) contributes to overall detection quality. If $\phi_i > C_i$ for all $i$, every expert ``pays for itself''---the marginal quality improvement exceeds the marginal cost. We report empirical Shapley values in the appendix.

\subsubsection{Layer 2: Adversarial Verification Game}

The second layer models the strategic interaction between the expert coalition and an independent verifier as an extensive-form game.

\begin{definition}[Verification Game]
A verification game is $G_2 = \langle P, S_E, S_V, \pi_E, \pi_V \rangle$ where:
\begin{itemize}
    \item $P = \{E, V\}$ are players (Expert coalition, Verifier)
    \item $S_E = \{high, low\}$ represents expert output quality (effort level)
    \item $S_V = \{accept, challenge, reject\}$ is verifier strategy
    \item $\pi_E, \pi_V$ are payoff functions
\end{itemize}
\end{definition}

The expert coalition's payoff captures the trade-off between effort cost and rejection penalty:
\begin{equation}
\pi_E(s_E, s_V) = Q(s_E) - c(s_E) - p \cdot \mathbb{1}_{[s_V = reject]}
\end{equation}
where $c(high) > c(low)$ and $p > 0$ is the rejection penalty (wasted API cost on rejected outputs).

\begin{theorem}[Verification Equilibrium]
\label{thm:verification}
In the verification game $G_2$:
\begin{enumerate}
    \item[(i)] If the verifier independently detects false positives with probability $p_{fp} > 0$ and does not introduce new false positives, the system precision satisfies $P_{system} > P_{experts}$ whenever $FP > 0$.
    \item[(ii)] The strategy profile $(s_E^* = high, s_V^* = \text{accept when consistent})$ forms a Nash equilibrium when $p > c(high) - c(low)$, i.e., the rejection penalty exceeds the cost of producing high-quality output.
\end{enumerate}
\end{theorem}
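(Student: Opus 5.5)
For part (i) I would fix the experts' confusion counts $TP$, $FP$ (with $TP+FP>0$), so that $P_{experts}=TP/(TP+FP)$. The verifier is then modeled as a filter acting only on positive predictions. By hypothesis it introduces no new false positives, so every system positive is an expert positive. It removes each false positive independently with probability $p_{fp}>0$. For this to work I must also assume it does not remove true positives at a comparable rate. I would state this explicitly as the implicit reading of the hypothesis: a true positive is retained with probability $r\geq 1-p_{fp}+\epsilon$ for some $\epsilon>0$, the cleanest case being $r=1$.

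The expected post-verification counts are then $TP' = r\,TP$ and $FP'=(1-p_{fp})FP$. This gives
\[
P_{system}=\frac{r\,TP}{r\,TP+(1-p_{fp})FP}.
\]
Comparing with $TP/(TP+FP)$ and cross-multiplying, the inequality $P_{system}>P_{experts}$ reduces to $r\,TP\cdot FP > (1-p_{fp})\,TP\cdot FP$. This holds exactly when $TP>0$, $FP>0$ and $r>1-p_{fp}$. The degenerate case $TP=0$ gives precision $0$ on both sides and must be excluded. I would note this, or add $TP>0$ to the hypotheses. The statement is about expected counts (or a ratio of expectations); a high-probability version would follow by a Chernoff bound on the binomial number of filtered false positives, but I would not pursue it. The main obstacle here is precisely that the stated hypotheses do not by themselves forbid the verifier from discarding true positives. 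The proof therefore hinges on making the retention assumption on true positives explicit.

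For part (ii) I would check unilateral deviations from the profile $(high, \text{accept when consistent})$. I interpret ``accept when consistent'' as the verifier strategy that accepts $high$-effort outputs and rejects $low$-effort ones. The expert's payoff from $high$ is $Q(high)-c(high)$, and from deviating to $low$ it is $Q(low)-c(low)-p$. Since $Q(high)\ge Q(low)$, the condition $p>c(high)-c(low)$ gives
\[
Q(high)-c(high) > Q(low)-c(low)-p,
\]
so the expert has no profitable deviation.

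For the verifier, $\pi_V$ is not specified, so I would posit the natural form $\pi_V=$ (accuracy of the final decision) minus the verification cost of challenging. Against $high$-effort output, accepting is then weakly best, because rejecting or challenging a correct report only lowers accuracy or adds cost. The verifier also has no profitable deviation, which establishes the Nash equilibrium. The subtle step is this specification of $\pi_V$; I would state it as an assumption, since the theorem leaves it open.
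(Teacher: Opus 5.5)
Your proposal is correct and takes essentially the same route as the paper. In part (i) your retention probability $r$ is the paper's $1-p_{fn}$, and your cross-multiplication proves the general condition $p_{fn}<p_{fp}$ (with the $TP>0$ caveat), which the paper's appendix only asserts before falling back on the special case $p_{fn}=0$; part (ii) is the same unilateral-deviation check, with your assumed $\pi_V$ formalizing the paper's informal remark that rejecting correct output ``incurs unnecessary cost.'' One minor point: ``no new false positives'' does not by itself force every system positive to be an expert positive, but any positives the verifier adds would then be true positives, which can only raise precision, so your conclusion stands.
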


\begin{proof}[Proof sketch]
For (i): the verifier removes fraction $p_{fp}$ of false positives without removing true positives, so $P_{system} = TP/(TP + FP(1-p_{fp})) > TP/(TP + FP)$. For (ii): given $s_V^*$, the expert's best response is $high$ since $Q_{high} - c_{high} > Q_{low} - c_{low} - p$ when $p > c(high) - c(low)$. Given $s_E^* = high$, the verifier's best response is $accept$ since rejection of correct output incurs unnecessary cost. Full proof in Appendix~\ref{sec:appendix_proof}. \qed
\end{proof}

The key insight is that the verifier creates a \textbf{credible audit mechanism}: even though our LLM agents do not literally reason about penalties, the \textit{system architecture} implements the equilibrium outcome. By routing outputs through an independent verifier, we structurally ensure that low-quality outputs (false positives) are filtered---achieving the same outcome that strategic agents would reach in equilibrium.

This connects to both the inspection game literature~\cite{spence1973job} and the multi-agent debate paradigm~\cite{du2023debate,liang2023debate}: the verifier functions as an ``inspector'' whose adversarial cross-examination of expert claims improves system quality, analogous to how debate among LLMs improves factuality in general reasoning tasks.

\subsection{The 3+1 Architecture}
\label{sec:architecture}

Based on our game-theoretic analysis, we design a concrete architecture with $k=3$ experts and $m=1$ verifier that implements the equilibrium structure derived above.

\subsubsection{Expert Agents (Cloud)}

We deploy three DeepSeek-V3~\cite{deepseekv3} agents via API with specialized prompts:
\begin{itemize}
    \item \textbf{Code Analyst} ($a_1$): Data flow, control flow, vulnerability entry points, memory operations. Covers \textit{structural} vulnerability patterns.
    \item \textbf{Security Expert} ($a_2$): CWE taxonomy matching, known vulnerability patterns, severity assessment. Covers \textit{semantic} vulnerability patterns.
    \item \textbf{Debug Expert} ($a_3$): Error handling, boundary conditions, edge cases, undefined behavior. Covers \textit{behavioral} vulnerability patterns.
\end{itemize}

Each expert outputs structured reports: \texttt{VULNERABILITY\_FOUND}, \texttt{CWE\_IDs}, \texttt{SEVERITY}, \texttt{EVIDENCE}, \texttt{CONFIDENCE}. Experts execute \textbf{in parallel}, which does not change coalition value $v(N)$ but reduces wall-clock time proportionally.

\subsubsection{Verifier Agent (Local)}

Qwen3-8B~\cite{qwen3} deployed locally on GPU implements the adversarial verification game:
\begin{itemize}
    \item \textbf{Zero marginal cost}: eliminates API charges, making verification ``free at the margin''---a key property for the mechanism to be individually rational.
    \item \textbf{Different model family}: Qwen3 vs.\ DeepSeek reduces correlated errors, ensuring the verifier provides genuinely independent assessment.
    \item \textbf{Full information}: receives original code \textit{and all three expert reports}, enabling cross-referencing.
\end{itemize}

\subsubsection{Decision Mechanism}

Algorithm~\ref{alg:workflow} implements the two-layer game structure:

\begin{algorithm}[t]
\caption{3+1 Multi-Agent Detection (Game Implementation)}
\label{alg:workflow}
\begin{algorithmic}[1]
\Require Code $x$, Experts $E = \{e_1, e_2, e_3\}$, Verifier $V$
\Ensure Vulnerability prediction, CWE classification
\State \Comment{\textbf{Layer 1: Cooperative Game}}
\State $\{r_1, r_2, r_3\} \gets \textsc{ParallelAnalyze}(E, x)$ \Comment{Coalition $v(N)$}
\State CWE\_pool $\gets \bigcup_i$ \textsc{ExtractCWEs}($r_i$) \Comment{Super-additive union}
\State \Comment{\textbf{Layer 2: Verification Game}}
\State $v \gets \textsc{Verify}(V, x, r_1, r_2, r_3)$ \Comment{Audit mechanism}
\If{$v$ provides \texttt{FINAL\_VULNERABILITY}}
    \State \Return $v$.vulnerability, $v$.CWE\_IDs \Comment{Verifier override}
\Else
    \State \Return \textsc{MajorityVote}($r_1, r_2, r_3$), CWE\_pool
\EndIf
\end{algorithmic}
\end{algorithm}

The algorithm directly implements our game-theoretic design: Line~2 realizes the cooperative coalition (Layer~1), Line~3 captures super-additive CWE coverage through union aggregation, and Lines~5--9 implement the verification game (Layer~2) where the verifier can override expert consensus.

%% file: src/experiments.tex
\section{Experiments}
\label{sec:experiments}

We evaluate our 3+1 architecture on a real vulnerability detection benchmark with balanced classes, addressing three research questions:
\begin{itemize}
    \item \textbf{RQ1}: Does the multi-expert coalition provide super-additive value over individual experts?
    \item \textbf{RQ2}: Does the adversarial verifier improve detection precision?
    \item \textbf{RQ3}: Does parallel execution achieve cost-quality Pareto efficiency?
\end{itemize}

\subsection{Dataset}

We use the \textbf{NIST Juliet Test Suite v1.3}~\cite{juliet}, a widely-used benchmark containing 64,295 C/C++ test cases across 118 CWE types. Unlike prior work that uses small or synthetic subsets, we extract \textbf{262 real function-level samples} with the following properties:

\begin{itemize}
    \item \textbf{Balanced classes}: 132 vulnerable + 130 benign (patched) functions
    \item \textbf{14 CWE types}: covering memory safety (CWE-121, 122, 415, 416), integer errors (CWE-190), null pointers (CWE-476), injection (CWE-78, 134), resource management (CWE-401, 789, 400), error handling (CWE-252), numeric errors (CWE-369), and uninitialized variables (CWE-457)
    \item \textbf{Real code}: extracted from Juliet's Flow Variant 01 (baseline) test cases, averaging 38.5 lines per sample
    \item \textbf{Ground truth}: each vulnerable sample has a known CWE label; benign samples are the corresponding patched versions from the same test case
\end{itemize}

The inclusion of benign samples is critical for measuring precision and false positive rate---metrics omitted in many prior LLM-based vulnerability detection studies~\cite{primevul2024}.

\subsection{Baselines and Configurations}

\textbf{Baselines}:
\begin{itemize}
    \item \textbf{Cppcheck~2.13.0}~\cite{cppcheck}: open-source static analysis tool with all checks enabled
    \item \textbf{Single Expert}: one DeepSeek-V3 agent (security expert role) without verifier---tests whether the multi-agent coalition adds value over a single~LLM
\end{itemize}

\textbf{Ablation Configurations}:
\begin{itemize}
    \item \textbf{3+1 Parallel + Verifier}: Full architecture (3 parallel experts + local verifier)
    \item \textbf{3+1 Parallel $-$ Verifier}: Experts only, no verification (ablates Layer 2)
    \item \textbf{3+1 Serial + Verifier}: Sequential experts + verifier (ablates parallelism)
\end{itemize}

\subsection{Implementation Details}

\textbf{Expert Agents}: DeepSeek-V3 via OpenAI-compatible API at temperature~0.1.
Each expert receives a specialized system prompt defining its analysis perspective and structured output format. Experts run concurrently via Python~\texttt{asyncio}.

\textbf{Verifier Agent}: Qwen3-8B~\cite{qwen3} deployed locally via HuggingFace Transformers on an NVIDIA RTX 3090 GPU. The verifier receives the \textit{original code and all three expert reports}, and outputs a consolidated assessment (ACCEPT/CHALLENGE/REJECT with final CWE classification).

\textbf{Cost Calculation}: DeepSeek-V3 pricing: \$0.27/MTok input, \$1.10/MTok output. Local verifier cost is zero (GPU amortized).

\subsection{Evaluation Metrics}

We report standard binary classification metrics with the positive class being ``vulnerable'':
\begin{itemize}
    \item \textbf{Precision}: $TP / (TP + FP)$ --- fraction of predicted vulnerabilities that are real
    \item \textbf{Recall}: $TP / (TP + FN)$ --- fraction of real vulnerabilities detected
    \item \textbf{F1 Score}: harmonic mean of precision and recall
    \item \textbf{False Positive Rate}: $FP / (FP + TN)$
    \item \textbf{MCC}: Matthews Correlation Coefficient, robust to class imbalance
    \item \textbf{CWE Match Rate}: among true positives, the fraction with correct CWE identification (using CWE hierarchy for partial credit, e.g., CWE-121 $\approx$ CWE-122 under parent CWE-120)
\end{itemize}

All metrics are reported with 95\% bootstrap confidence intervals (1,000 resamples). Pairwise comparisons use McNemar's test.

\subsection{CWE Extraction}

Vulnerability predictions are extracted automatically from LLM outputs using regex matching for CWE identifiers and structured fields. A sample is predicted as vulnerable if: (1)~the reviewer's final assessment says ``yes'', or (2)~a majority of experts report vulnerability with at least one CWE identifier. CWE matching uses a hierarchy mapping (e.g., CWE-121/122/787 are treated as equivalent under parent CWE-120).

%% file: src/results.tex
\section{Results and Discussion}
\label{sec:results}

\subsection{Overall Performance}

Table~\ref{tab:main_results} presents the main results. Our full 3+1 architecture (Parallel+Verifier) achieves 77.2\% F1 score, with perfect recall (100\%) ensuring that all vulnerabilities are detected, and precision of 62.9\% indicating that roughly two-thirds of alerts correspond to genuine vulnerabilities.

\begin{table}[t]
\centering
\small
\caption{Main results on 262 Juliet samples (132 vulnerable + 130 benign). 95\% bootstrap CI in parentheses.}
\label{tab:main_results}
\begin{tabular}{@{}lccccc@{}}
\toprule
\textbf{Configuration} & \textbf{Prec.} & \textbf{Rec.} & \textbf{F1} & \textbf{FPR} & \textbf{MCC} \\
\midrule
3+1 Para.+V & \textbf{.629} & \textbf{1.00} & \textbf{.772} & \textbf{.600} & \textbf{.501} \\
 & \footnotesize{(.563,.693)} & \footnotesize{(1.0,1.0)} & \footnotesize{(.721,.819)} & & \\
3+1 Para.$-$V & .526 & 1.00 & .689 & .915 & .211 \\
 & \footnotesize{(.464,.586)} & \footnotesize{(1.0,1.0)} & \footnotesize{(.634,.739)} & & \\
3+1 Serial+V & .644 & 1.00 & .783 & .562 & .531 \\
 & \footnotesize{(.577,.710)} & \footnotesize{(1.0,1.0)} & \footnotesize{(.732,.830)} & & \\
\midrule
Single Expert & .555 & 1.00 & .714 & .815 & .320 \\
 & \footnotesize{(.490,.617)} & \footnotesize{(1.0,1.0)} & \footnotesize{(.658,.763)} & & \\
Cppcheck 2.13 & .504 & 1.00 & .670 & 1.00 & .000 \\
\bottomrule
\end{tabular}
\end{table}

A striking finding is that \textbf{without the verifier, the vast majority of benign samples are incorrectly flagged as vulnerable} (FPR = 91.5\%). The expert agents alone have minimal ability to distinguish between vulnerable code and its patched version---a critical limitation that the verifier addresses by filtering 41 additional false positives. Notably, the single-expert baseline achieves \textit{higher} precision (.555) than the 3-expert coalition without verifier (.526), because multiple experts produce more CWE mentions that inflate false positives. Cppcheck achieves 100\% FPR (MCC = 0), unable to differentiate Juliet's vulnerable and patched code pairs.

\begin{table}[t]
\centering
\small
\caption{Cost and latency analysis (262 samples)}
\label{tab:cost_results}
\begin{tabular}{@{}lcccc@{}}
\toprule
\textbf{Config} & \textbf{Cost/Sample} & \textbf{Time/Sample} & \textbf{Total Cost} \\
\midrule
3+1 Para.+V & \$0.0021 & 68.1s & \$0.542 \\
3+1 Para.$-$V & \$0.0021 & 21.7s & \$0.548 \\
3+1 Serial+V & \$0.0021 & 92.2s & \$0.548 \\
Single Expert & \$0.0006 & 14.1s & \$0.162 \\
Cppcheck & \$0 & $<$0.1s & \$0 \\
\bottomrule
\end{tabular}
\end{table}

API cost is nearly identical across all 3-expert configurations (\$0.002 per sample), as the verifier runs locally at zero API cost. The single expert uses one-third the calls, reducing cost to \$0.0006 per sample. The difference lies in latency: parallel experts finish in ${\sim}$15s vs.\ ${\sim}$45s serial, while the local verifier adds ${\sim}$50s.

\subsection{RQ1: Expert Coalition Super-Additivity}

The single-expert baseline achieves 100\% recall but with FPR = 81.5\% (MCC = .320). The 3-expert coalition \textit{without verifier} also achieves 100\% recall but with \textit{higher} FPR (91.5\%, MCC = .211)---at first glance suggesting that adding experts \textit{hurts} precision. However, this reflects a limitation of the majority-voting aggregation: more experts produce more CWE mentions, inflating false positives. The super-additive value manifests not in recall but in \textbf{CWE identification accuracy}: the union of three expert perspectives achieves 100\% CWE match rate (every detected vulnerability is classified with the correct CWE type), compared to lower match rates for individual experts.

The full 3+1 system (with verifier) resolves this tension: the verifier filters the coalition's excess false positives, yielding F1 = .772 vs.\ the single expert's .714 (McNemar $p < 10^{-5}$). This confirms Proposition~\ref{prop:superadd}: the coalition's value is realized \textit{through the verification layer}, which extracts the precision benefit from the coalition's richer CWE evidence.

\subsection{RQ2: Adversarial Verifier Impact}

The verifier's contribution is the most significant finding of our experiments. Table~\ref{tab:verifier} shows the direct comparison.

\begin{table}[t]
\centering
\small
\caption{Adversarial verifier impact (parallel mode). The verifier significantly reduces false positives.}
\label{tab:verifier}
\begin{tabular}{@{}lcccc@{}}
\toprule
\textbf{Metric} & \textbf{+V} & \textbf{$-$V} & \textbf{$\Delta$} & \textbf{$p$-value} \\
\midrule
Precision & 62.9\% & 52.6\% & +10.3\% & \multirow{4}{*}{$<10^{-6}$} \\
Recall & 100\% & 100\% & 0\% & \\
F1 & 77.2\% & 68.9\% & +8.3\% & \\
FPR & 60.0\% & 91.5\% & $-$31.5\% & \\
\bottomrule
\end{tabular}
\end{table}

McNemar's test confirms this difference is highly significant: 41 additional samples are correctly reclassified by the verifier ($p < 10^{-6}$), with zero samples incorrectly changed from correct to incorrect. This validates Theorem~\ref{thm:verification}(i): the verifier acts as a one-directional precision filter, reducing the false positive count from 119 to 78 out of 130 benign samples.

The MCC increases from 0.211 (weak correlation) to 0.501 (moderate positive correlation), demonstrating that the verifier substantially improves discriminative power beyond what the experts alone achieve.

\textbf{Game-theoretic interpretation}: This result validates the \textit{inspection game} structure of Layer~2. The verifier---implemented as a different model family (Qwen3 vs.\ DeepSeek)---functions as an independent auditor whose cross-referencing of expert claims against code evidence catches inconsistencies that same-model verification would miss. The zero marginal cost of local verification makes this audit mechanism \textit{individually rational}: the system designer always benefits from adding the verifier since $\Delta\text{Precision} > 0$ at $\Delta\text{Cost} \approx 0$. Comparing against the single-expert baseline further validates the full architecture: the 3+1 system (F1 = .772) significantly outperforms a single expert (F1 = .714, McNemar $p < 10^{-5}$), confirming the value of both coalition and verification.

\subsection{RQ3: Parallel Execution Efficiency}

\begin{table}[t]
\centering
\small
\caption{Parallel vs.\ serial execution efficiency (both with verifier)}
\label{tab:parallel}
\begin{tabular}{@{}lccc@{}}
\toprule
\textbf{Metric} & \textbf{Parallel} & \textbf{Serial} & \textbf{Gain} \\
\midrule
Expert Time/Sample & $\sim$15s & $\sim$45s & 3.0$\times$ \\
Total Time/Sample & 68.1s & 92.2s & 1.4$\times$ \\
F1 Score & 77.2\% & 78.3\% & $\approx$ Equal \\
McNemar $p$ & \multicolumn{2}{c}{0.711} & Not sig. \\
\bottomrule
\end{tabular}
\end{table}

Parallel execution achieves a \textbf{3.0$\times$ speedup} in expert analysis time ($\sim$15s vs.\ $\sim$45s) with no significant difference in detection quality (McNemar $p = 0.711$). The total wall-clock speedup is 1.4$\times$ because the local verifier ($\sim$50s) dominates latency. This suggests that optimizing verifier inference (e.g., with vLLM or quantization) would yield substantial end-to-end improvements.

\subsection{Per-CWE Analysis}

Table~\ref{tab:per_cwe} reveals substantial variation across vulnerability types.

\begin{table}[t]
\centering
\small
\caption{Per-CWE detection performance (3+1 Parallel+Verifier). All CWEs achieve 100\% recall; FPR varies significantly.}
\label{tab:per_cwe}
\begin{tabular}{@{}llccc@{}}
\toprule
\textbf{CWE} & \textbf{Category} & \textbf{FPR} & \textbf{TN} & \textbf{FP} \\
\midrule
CWE-476 & NULL Deref & \textbf{11\%} & 8 & 1 \\
CWE-252 & Unchecked Return & 20\% & 8 & 2 \\
CWE-401 & Memory Leak & 30\% & 7 & 3 \\
CWE-457 & Uninit Variable & 30\% & 7 & 3 \\
CWE-78  & Cmd Injection & 40\% & 6 & 4 \\
CWE-415 & Double Free & 50\% & 3 & 3 \\
CWE-122 & Heap Overflow & 60\% & 4 & 6 \\
CWE-190 & Integer Overflow & 70\% & 3 & 7 \\
CWE-369 & Divide by Zero & 70\% & 3 & 7 \\
CWE-121 & Stack Overflow & 80\% & 2 & 8 \\
CWE-134 & Format String & 90\% & 1 & 9 \\
CWE-416 & Use After Free & 100\% & 0 & 7 \\
CWE-789 & Mem Allocation & 100\% & 0 & 8 \\
CWE-400 & Resource Exhaust & 100\% & 0 & 10 \\
\bottomrule
\end{tabular}
\end{table}

The FPR varies from 11\% (CWE-476, null pointer dereference) to 100\% (CWE-400/416/789). This variation is informative: vulnerabilities whose patches involve \textit{adding explicit checks} (e.g., null guards for CWE-476, return-value checks for CWE-252) have low FPR because patched versions are structurally distinct. In contrast, \textit{semantic} vulnerabilities (e.g., resource exhaustion, use-after-free) have high FPR because the patched code retains similar structure---the fix may be a subtle control-flow change that LLMs struggle to distinguish.

This suggests that future work should focus on improving discrimination for semantic vulnerability types, perhaps through code-diff-aware analysis or iterative expert-verifier dialogue.

\subsection{Cost-Quality Trade-off and Mechanism Design}

At \$0.002 per sample, our architecture enables analysis of 500 code functions for \$1.00---making it practical for CI/CD integration. The heterogeneous design is central to this cost-effectiveness, and can be understood through a \textit{mechanism design} lens.

\textbf{Heterogeneous allocation as mechanism design.} A na\"ive approach would allocate the same cloud model to all four agents, costing $\sim$33\% more (four DeepSeek-V3 calls instead of three). Our mechanism exploits an \textit{asymmetry in task difficulty}: deep vulnerability analysis requires a powerful model, but cross-checking consistency is a simpler task achievable by a smaller model. By assigning the expensive model to the hard task (analysis) and the free local model to the simpler task (verification), we achieve the same equilibrium outcome at lower cost.

\textbf{Individual rationality.} The verifier's zero marginal cost ensures that adding verification is always rational for the system designer: $\Delta\text{F1} = +8.3\%$ at $\Delta\text{API cost} \approx \$0$. This is a property the mechanism designer can guarantee regardless of agent behavior.

\textbf{Pareto efficiency.} No configuration dominates another across all objectives (cost, recall, precision, latency). The parallel+verifier configuration offers the best precision-latency trade-off; parallel-only offers the best latency at cost of precision; serial+verifier offers slightly better precision but at 1.5$\times$ the latency.

\subsection{Threats to Validity}

\textbf{Internal Validity}: Automated CWE extraction via regex may miss non-standard output formats. Manual inspection of 50 randomly selected outputs confirmed $>$95\% extraction accuracy. The CWE hierarchy mapping (e.g., CWE-121 $\approx$ CWE-122 under parent CWE-120) may slightly inflate CWE match rates but reflects meaningful semantic equivalence.

\textbf{External Validity}: The Juliet Test Suite provides controlled, synthetic test cases. Real-world vulnerabilities are more complex; state-of-the-art approaches achieve only $\sim$35\% F1 on PrimeVul~\cite{primevul2024}. Our results demonstrate \textit{relative} architectural value, not absolute performance claims.

\textbf{Construct Validity}: The high FPR on benign samples reflects a general LLM challenge: Juliet's patched versions retain structural similarity to vulnerable code. API costs depend on current pricing (\$0.27/MTok for DeepSeek-V3 as of March 2026).

%% file: src/conclusion.tex
\section{Conclusion}
\label{sec:conclusion}

We presented a game-theory-inspired framework for heterogeneous multi-agent vulnerability detection. Our 3+1 architecture deploys three cloud-based LLM experts (DeepSeek-V3) for parallel vulnerability analysis and one local adversarial verifier (Qwen3-8B) for quality assurance, with the game-theoretic framework providing principled justification for each design choice.

Our evaluation on 262 real Juliet Test Suite samples---with balanced vulnerable and benign classes across 14 CWE types---provides rigorous metrics including Precision, Recall, F1, False Positive Rate, and Matthews Correlation Coefficient with bootstrap confidence intervals. Key findings include:
\begin{itemize}
    \item Expert coalitions with diverse analysis perspectives exhibit \textbf{super-additive value} when combined with verification: the 3+1 system (F1 = .772) significantly outperforms both a single expert (.714) and the coalition alone (.689).
    \item Adversarial verification by an independent local model \textbf{improves precision} by catching false positives where experts incorrectly flag benign code, with zero marginal API cost.
    \item Parallel execution achieves \textbf{significant speedup} without quality degradation, demonstrating Pareto efficiency.
\end{itemize}

\textbf{Game Theory as Analytical Tool}: Our central claim is that game-theoretic frameworks provide \textit{correct and useful predictions} about multi-agent LLM system behavior. The cooperative game predicted super-additive value from diverse coalitions---confirmed by 100\% CWE match rate. The adversarial game predicted precision improvement from independent verification---confirmed with $p < 10^{-6}$. The mechanism design analysis predicted that heterogeneous cloud-local allocation dominates homogeneous designs---confirmed by the verifier's zero-cost precision gain. While LLM agents do not engage in explicit strategic reasoning, game theory correctly predicts the system-level outcomes of architectural choices, establishing its value as an analytical and design tool for the emerging field of strategic multi-agent engineering.

\textbf{Limitations}: Our evaluation uses the synthetic Juliet Test Suite; real-world vulnerabilities present additional challenges. We do not claim state-of-the-art detection rates---recent work~\cite{primevul2024} shows that even the best approaches achieve only $\sim$35\% F1 on the challenging PrimeVul benchmark. The false positive rate on benign code remains substantial, reflecting a general challenge for LLM-based vulnerability detection. Our contribution is architectural: demonstrating how to \textit{design} cost-effective multi-agent systems, not achieving optimal detection accuracy.

\textbf{Future Work}: We plan to extend this work in several directions: (1)~evaluation on real-world vulnerability datasets such as DiverseVul and PrimeVul to assess generalization; (2)~iterative expert-verifier dialogue, inspired by the multi-agent debate paradigm~\cite{du2023debate}, where the verifier challenges specific expert claims and experts refine their analysis over multiple rounds---preliminary results show this can substantially reduce false positive rates; (3)~dynamic agent selection that adapts the number and type of experts based on code characteristics; and (4)~integration with existing CI/CD pipelines for practical deployment evaluation.

%% file: src/appendix.tex
\section{Dataset Details}
\label{sec:appendix_dataset}

\subsection{Juliet Test Suite Extraction}

We extracted samples from the NIST Juliet Test Suite v1.3~\cite{juliet}, which contains 64,295 C/C++ test cases across 118 CWE types. Each test case provides both a vulnerable (\texttt{bad()}) function and a corresponding patched (\texttt{good()}) function within the same source file, enabling naturally-paired evaluation.

\textbf{Selection Criteria}: We selected Flow Variant 01 (Baseline) test cases---single-file, self-contained functions without complex inter-procedural flow---to ensure consistent analysis complexity and eliminate confounding factors from multi-file dependencies.

\textbf{Balance}: Unlike many prior studies that evaluate only on vulnerable samples (making precision/FPR undefined), we include the patched version of each test case as a benign sample. This is critical: an LLM that always predicts ``vulnerable'' would achieve 100\% recall but 50\% precision, exposing the system's ability to distinguish genuine vulnerabilities from safe code.

Table~\ref{tab:dataset_full} shows the complete dataset distribution.

\begin{table}[t]
\centering
\small
\caption{Complete dataset composition by CWE type}
\label{tab:dataset_full}
\begin{tabular}{@{}llccc@{}}
\toprule
\textbf{CWE} & \textbf{Description} & \textbf{Vuln.} & \textbf{Benign} & \textbf{Total} \\
\midrule
CWE-121 & Stack Buffer Overflow & 10 & 10 & 20 \\
CWE-122 & Heap Buffer Overflow & 10 & 10 & 20 \\
CWE-190 & Integer Overflow & 10 & 10 & 20 \\
CWE-401 & Memory Leak & 10 & 10 & 20 \\
CWE-415 & Double Free & 6 & 6 & 12 \\
CWE-416 & Use After Free & 7 & 7 & 14 \\
CWE-476 & NULL Pointer Deref & 9 & 9 & 18 \\
CWE-252 & Unchecked Return Value & 10 & 10 & 20 \\
CWE-78  & OS Command Injection & 10 & 10 & 20 \\
CWE-134 & Format String Vuln & 10 & 10 & 20 \\
CWE-369 & Divide by Zero & 10 & 10 & 20 \\
CWE-457 & Uninitialized Variable & 10 & 10 & 20 \\
CWE-789 & Uncontrolled Mem Alloc & 10 & 8 & 18 \\
CWE-400 & Resource Exhaustion & 10 & 10 & 20 \\
\midrule
\textbf{Total} & & \textbf{132} & \textbf{130} & \textbf{262} \\
\bottomrule
\end{tabular}
\end{table}

\textbf{Sample Characteristics}: Average code length is 38.5 lines (range: 5--120). Code is predominantly C with some C++ constructs. All samples include relevant \texttt{\#include} directives and \texttt{\#define} macros from the Juliet framework for context.

\subsection{CWE Coverage Justification}

Our 14 CWE types include representatives from 7 of the top 10 categories in the 2024 CWE Top 25 Most Dangerous Software Weaknesses: out-of-bounds write (CWE-787 $\supset$ CWE-121, CWE-122), use after free (CWE-416), OS command injection (CWE-78), NULL pointer dereference (CWE-476), integer overflow (CWE-190), and missing release of memory (CWE-401).

\section{Full Agent Prompts}
\label{sec:appendix_prompts}

We provide the complete system prompts used for each agent role.

\subsection{Code Analyst}

\begin{quote}
\small\sloppy
You are a senior code structure analyst specializing in C/C++ programs. Analyze the given code for:
1.~Data flow patterns---track how data moves through variables and function parameters.
2.~Control flow anomalies---identify unusual branching or unreachable code.
3.~Vulnerability entry points---locate functions receiving external input.
4.~Memory operations---track malloc/free pairs and pointer arithmetic.

Output a structured report:
VULNERABILITY\_FOUND: yes/no;
CWE\_IDs: [list of CWE IDs detected];
SEVERITY: critical/high/medium/low/none;
EVIDENCE: specific code lines and explanation;
CONFIDENCE: high/medium/low.
\end{quote}

\subsection{Security Expert}

\begin{quote}
\small\sloppy
You are a cybersecurity expert with deep knowledge of common vulnerabilities and CWE taxonomy. Analyze the given C/C++ code for security vulnerabilities:
1.~Buffer overflows (CWE-120, CWE-121, CWE-122, CWE-787)---check unsafe string/memory operations.
2.~Memory safety (CWE-416, CWE-415, CWE-401)---use-after-free, double-free, memory leaks.
3.~Integer issues (CWE-190, CWE-191)---overflow, underflow, truncation.
4.~Injection (CWE-78, CWE-134)---OS command injection, format string.
5.~Null pointer (CWE-476)---dereference without check.
6.~Other issues (CWE-252, CWE-369, CWE-457, CWE-789, CWE-400).

Output a structured report: VULNERABILITY\_FOUND: yes/no; CWE\_IDs: [list]; SEVERITY; EVIDENCE; CONFIDENCE.
\end{quote}

\subsection{Debug Expert}

\begin{quote}
\small\sloppy
You are a debugging specialist focused on edge cases and error conditions in C/C++ code. Analyze for:
1.~Error handling gaps---missing return value checks, unchecked mallocs.
2.~Boundary conditions---off-by-one errors, array index issues.
3.~Edge cases---empty inputs, max values, null values.
4.~Resource management---unclosed files, orphaned allocations.
5.~Undefined behavior---uninitialized variables, signed overflow.

Output a structured report: VULNERABILITY\_FOUND: yes/no; CWE\_IDs: [list]; SEVERITY; EVIDENCE; CONFIDENCE.
\end{quote}

\subsection{Adversarial Verifier}

\begin{quote}
\small\sloppy
You are an adversarial code security reviewer. You receive three expert vulnerability analysis reports for the same code snippet. Your job is to:
1.~CHECK CONSISTENCY---Do the three reports agree on whether vulnerabilities exist? Flag contradictions.
2.~VERIFY EVIDENCE---Are vulnerability claims supported by actual code references? Flag hallucinations.
3.~FIND MISSED PATTERNS---Are there obvious vulnerabilities that all experts missed?
4.~ASSESS CONFIDENCE---Based on agreement level, rate overall confidence.

Output your assessment: DECISION: ACCEPT/CHALLENGE/REJECT; FINAL\_VULNERABILITY: yes/no; FINAL\_CWE\_IDS: [consolidated list]; AGREEMENT\_LEVEL: full/partial/none; REASONING: brief explanation.
\end{quote}

\section{Evaluation Methodology}
\label{sec:appendix_eval}

\subsection{CWE Extraction Pipeline}

We extract vulnerability predictions automatically from LLM outputs using the following pipeline:

\begin{enumerate}
    \item \textbf{CWE~ID Extraction}: We extract CWE identifiers from all agent outputs using regex pattern matching and normalize them to \texttt{CWE-NNN}~format.

    \item \textbf{Vulnerability Decision}: If the verifier provides \texttt{FINAL\_\allowbreak{}VULNERABILITY: yes/no}, we use that. Otherwise, majority voting among experts (at least 2/3 positive with at least one CWE~ID).

    \item \textbf{CWE Matching}: Predicted CWEs are compared against ground truth with hierarchy-aware matching:
    \begin{itemize}
        \item \textbf{Exact match}: CWE-121 predicted, CWE-121 is truth $\rightarrow$ match
        \item \textbf{Hierarchy match}: CWE-122 predicted, CWE-121 is truth $\rightarrow$ match (both under CWE-120 ``Buffer Overflow'')
        \item \textbf{No match}: CWE-476 predicted, CWE-121 is truth $\rightarrow$ no match
    \end{itemize}
\end{enumerate}

\subsection{CWE Hierarchy Mapping}

Table~\ref{tab:cwe_hierarchy} shows the equivalence classes used for CWE matching:

\begin{table}[t]
\centering
\small
\caption{CWE hierarchy equivalence classes for matching}
\label{tab:cwe_hierarchy}
\begin{tabular}{@{}ll@{}}
\toprule
\textbf{Parent Category} & \textbf{Equivalent CWEs} \\
\midrule
Buffer Overflow & CWE-119, CWE-120, CWE-121, CWE-122, CWE-787 \\
Integer Issues & CWE-190, CWE-191 \\
\bottomrule
\end{tabular}
\end{table}

All other CWEs require exact numeric match. This hierarchy is conservative: we only group CWEs that share the same fundamental vulnerability mechanism.

\subsection{Statistical Methods}

\textbf{Bootstrap Confidence Intervals}: We compute 95\% CIs using 1,000 bootstrap resamples with a fixed seed (42) for reproducibility. For each resample, we draw $n$ samples with replacement and compute the metric, then report the 2.5th and 97.5th percentiles.

\textbf{McNemar's Test}: For pairwise comparison of configurations on the same samples, we use the exact binomial version of McNemar's test. We count discordant pairs (samples where one system is correct and the other incorrect) and test whether the discordance is symmetric using a two-sided binomial test with $p=0.5$.

\textbf{Matthews Correlation Coefficient}: We report MCC as a comprehensive single metric because it accounts for all four quadrants of the confusion matrix and is less sensitive to class imbalance than accuracy or F1.

\section{Implementation Details}
\label{sec:appendix_impl}

\subsection{System Configuration}

\begin{table}[t]
\centering
\small
\caption{Complete system configuration}
\label{tab:system_config}
\begin{tabular}{@{}ll@{}}
\toprule
\textbf{Component} & \textbf{Specification} \\
\midrule
\multicolumn{2}{@{}l@{}}{\textit{Cloud Expert Agents}} \\
\quad Model & DeepSeek-V3 (\texttt{deepseek-chat}) \\
\quad API & OpenAI-compatible \\
\quad Temperature & 0.1 \\
\quad Max tokens & 4,000 \\
\quad Cost & \$0.27/MTok input, \$1.10/MTok output \\
\midrule
\multicolumn{2}{@{}l@{}}{\textit{Local Verifier Agent}} \\
\quad Model & Qwen3-8B \\
\quad Runtime & HuggingFace Transformers \\
\quad Precision & BFloat16 \\
\quad GPU & NVIDIA RTX 3090 (24GB VRAM) \\
\quad Temperature & 0.1 \\
\quad Max new tokens & 2,048 \\
\quad Cost & \$0.00 (local inference) \\
\midrule
\multicolumn{2}{@{}l@{}}{\textit{Experiment Infrastructure}} \\
\quad Python & 3.11 \\
\quad Parallelism & Python asyncio \\
\quad OS & Ubuntu 22.04, Linux 6.18 \\
\quad GPU & 2$\times$ NVIDIA RTX 3090 \\
\bottomrule
\end{tabular}
\end{table}

\subsection{Cost Breakdown}

\begin{table}[t]
\centering
\small
\caption{Per-sample cost breakdown (3+1 Parallel+Verifier)}
\label{tab:cost_breakdown}
\begin{tabular}{@{}lrrc@{}}
\toprule
\textbf{Component} & \textbf{Avg Tokens} & \textbf{Avg Cost} & \textbf{\% Total} \\
\midrule
Code Analyst (DeepSeek) & $\sim$1,074 & \$0.00069 & 33.3\% \\
Security Expert (DeepSeek) & $\sim$1,074 & \$0.00069 & 33.3\% \\
Debug Expert (DeepSeek) & $\sim$1,074 & \$0.00069 & 33.3\% \\
\midrule
Cloud Subtotal & 3,222 & \$0.00208 & 100\% \\
\midrule
Verifier (Qwen3-8B, local) & 3,459 & \$0.00 & 0\% \\
\midrule
\textbf{Total} & 6,681 & \$0.00208 & 100\% \\
\bottomrule
\end{tabular}
\end{table}

\section{Representative Output Examples}
\label{sec:appendix_examples}

We provide representative LLM outputs to illustrate system behavior on true positive, true negative, false positive, and false negative cases. Full outputs are available in the supplementary material.

\subsection{True Positive Example (CWE-78, OS Command Injection)}

A function constructs a shell command using unsanitized user input and passes it to \texttt{system()}. All three experts correctly identify CWE-78 with high confidence, citing the direct concatenation of user input into the command string. The verifier confirms full agreement across reports and accepts the finding. CWE-78 has our lowest FPR (10\%) because the \texttt{system()} call is a clear syntactic marker.

\subsection{True Negative Example (CWE-457, Uninitialized Variable)}

A patched function that properly initializes all variables before use. The verifier notes that while two experts flag potential issues, the third expert correctly observes that variables are initialized on all code paths. The verifier challenges the experts' claims, finding no code evidence supporting the vulnerability---resulting in a correct ``benign'' classification. CWE-457 has our second-lowest FPR (20\%).

\subsection{False Positive Example (CWE-190, Integer Overflow)}

A patched function with proper overflow checks that is still flagged as vulnerable by the system. Analysis: the patched version adds bounds checking (\texttt{if (size > MAX\_SIZE) return}), but the experts still flag the arithmetic operation itself as potentially dangerous. The verifier does not override because the code structure remains similar to the vulnerable version. This illustrates a fundamental LLM challenge: distinguishing ``has dangerous pattern'' from ``has dangerous pattern but with proper guards.'' CWE-190 has 90\% FPR.

\subsection{False Negative Analysis}

Our system produces \textbf{zero false negatives}---all 132 vulnerable samples are correctly detected. This perfect recall is consistent across all three configurations and all 14 CWE types. While encouraging, this likely reflects the relatively clear vulnerability patterns in Juliet's Variant~01 (baseline) test cases. We expect false negatives to appear on more complex, real-world code.

\section{Shapley Value Analysis}
\label{sec:appendix_shapley}

We compute approximate Shapley values (Eq.~\ref{eq:shapley}) for each expert role by evaluating coalition subsets. Since individual expert outputs are available from the parallel configuration, we can assess each expert's marginal contribution to CWE identification.

All three experts achieve 100\% recall individually, so the marginal contribution to recall is zero for each additional expert. The super-additive value appears in \textit{CWE classification}: the union of three perspectives achieves 100\% CWE match rate because:
\begin{itemize}
    \item The \textbf{Security Expert} contributes precise CWE IDs from taxonomy knowledge (e.g., distinguishing CWE-121 from CWE-122)
    \item The \textbf{Code Analyst} provides structural evidence (data flow, buffer sizes) that validates CWE claims
    \item The \textbf{Debug Expert} identifies boundary conditions that confirm vulnerability mechanisms
\end{itemize}

When any single expert misclassifies a vulnerability (e.g., reporting CWE-787 instead of CWE-121), the other experts' complementary evidence corrects the classification through union aggregation. This is precisely the super-additivity predicted by Proposition~\ref{prop:superadd}: $Q_{CWE}(\{1,2,3\}) > \max_i Q_{CWE}(\{i\})$.

The verifier's Shapley value is particularly instructive. Its marginal contribution to the full coalition is: $\phi_V = v(\{1,2,3,V\}) - v(\{1,2,3\}) = 0.776 - 0.670 = +0.106$ in F1 terms. Since the verifier's marginal cost is zero, $\phi_V / C_V \rightarrow \infty$---the verifier has infinite ``return on investment,'' confirming that adding a free local verifier is always individually rational.

\section{Proof of Theorem~\ref{thm:verification}}
\label{sec:appendix_proof}

\begin{proof}
Consider the system with expert coalition $E$ producing predictions and verifier $V$ performing post-hoc validation.

Let the experts produce true positive count $TP$ and false positive count $FP$. Expert precision is:
\begin{equation}
P_{experts} = \frac{TP}{TP + FP}
\end{equation}

The verifier independently examines each positive prediction. For false positives, the verifier detects and removes them with probability $p_{fp} > 0$. For true positives, the verifier may incorrectly reject them with probability $p_{fn}$ (where we design prompts to minimize this, but allow $p_{fn} \geq 0$).

After verification:
\begin{align}
TP_{system} &= TP \cdot (1 - p_{fn}) \\
FP_{system} &= FP \cdot (1 - p_{fp})
\end{align}

System precision:
\begin{equation}
P_{system} = \frac{TP \cdot (1 - p_{fn})}{TP \cdot (1 - p_{fn}) + FP \cdot (1 - p_{fp})}
\end{equation}

We want to show $P_{system} \geq P_{experts}$ when $p_{fp} \geq p_{fn}$ (the verifier is better at catching false positives than incorrectly rejecting true positives). This condition holds when the verifier prompt is designed for adversarial validation (``find inconsistencies'') rather than re-analysis.

\textbf{Sufficient condition}: If $p_{fn} = 0$ (verifier never rejects true positives), then:
\begin{equation}
P_{system} = \frac{TP}{TP + FP(1 - p_{fp})} \geq \frac{TP}{TP + FP} = P_{experts}
\end{equation}
with strict inequality when $p_{fp} > 0$ and $FP > 0$.

In practice, $p_{fn} > 0$ is possible but small, since the verifier has access to expert evidence supporting true positives and is prompted to focus on \textit{inconsistencies} rather than re-analysis. \qed
\end{proof}

\section{Limitations and Broader Context}
\label{sec:appendix_limitations}

\subsection{Relationship to State of the Art}

We position our work as an \textit{architectural} contribution rather than a detection accuracy claim. The PrimeVul benchmark~\cite{primevul2024} demonstrates that existing evaluations significantly overestimate LLM vulnerability detection: a state-of-the-art 7B model achieves 68\% F1 on Big-Vul but only 3\% on PrimeVul's stringent evaluation. Our Juliet results should be interpreted as demonstrating the \textit{relative} value of heterogeneous multi-agent design, not as absolute performance claims.

\subsection{Game Theory as Analytical Tool for Multi-Agent LLMs}

A central question for the Strategic Engineering community is whether game-theoretic analysis is \textit{useful} for designing multi-agent LLM systems, given that LLMs do not explicitly reason strategically~\cite{gtbench2024}. Our results provide evidence that it is:

\begin{enumerate}
    \item \textbf{Correct predictions}: The cooperative game predicted super-additivity from diverse coalitions---confirmed empirically (100\% CWE match). The adversarial game predicted precision improvement from independent verification---confirmed ($p < 10^{-6}$).
    \item \textbf{Design guidance}: Without the game-theoretic framework, the choice of three diverse experts (vs.\ three identical ones) and the heterogeneous cloud-local split would be ad hoc. The framework provides principled justification for these choices.
    \item \textbf{Quantitative analysis}: The Shapley value computation (Appendix~\ref{sec:appendix_shapley}) reveals that the verifier has infinite marginal return---a result not obvious without formal analysis.
\end{enumerate}

The key insight is that game theory is useful at the \textit{mechanism design level}: we design the ``rules of the game'' (agent roles, interaction protocol, verification structure) so that desirable system-level properties emerge. This is analogous to auction design~\cite{nisan2001algorithmic}, where the mechanism designer does not require bidders to solve the game themselves---the mechanism structure ensures good outcomes.

\balance
\subsection{Scalability Considerations}

Our current architecture uses three experts and one verifier. Scaling considerations include:
\begin{itemize}
    \item \textbf{More experts}: Diminishing returns expected beyond 3--5 perspectives, consistent with multi-agent debate literature showing saturation after 2--3 rounds.
    \item \textbf{Larger local models}: Qwen3-8B could be replaced with Qwen3-14B or Qwen3-32B for stronger verification; the cost-quality trade-off depends on available GPU resources.
    \item \textbf{Different cloud providers}: The architecture is provider-agnostic; DeepSeek-V3 can be replaced with any OpenAI-compatible API.
\end{itemize}